\documentclass[letterpaper,10pt,conference]{IEEEtran}
\IEEEoverridecommandlockouts

\usepackage[
    letterpaper,
    left=0.75in,
    right=0.75in,
    top=0.75in,
    bottom=0.75in
]{geometry}

\usepackage{amsmath,amssymb,amsfonts}
\usepackage{cite}
\usepackage{algorithm}
\usepackage{algpseudocode}
\usepackage{graphicx}
\usepackage{textcomp}
\usepackage{xcolor}
\usepackage{lipsum}
\usepackage{dsfont}
\usepackage{calligra}
\usepackage{mathrsfs}
\usepackage{hyperref}
\hypersetup{colorlinks=true,linkcolor=blue,citecolor=blue, linktocpage}

\newtheorem{theorem}{Theorem}[section]

\newtheorem{proposition}[theorem]{Proposition}
\newtheorem{definition}{Definition}[section]
\newtheorem{assumption}{Assumption}
\newtheorem{problem}{Problem}

\usepackage{etoolbox}

\AtEndEnvironment{theorem}{\hfill$\IEEEQED$}
\AtEndEnvironment{corollary}{\hfill$\IEEEQED$}
\AtEndEnvironment{lemma}{\hfill$\IEEEQEDe$}
\AtEndEnvironment{proposition}{\hfill$\IEEEQED$}
\AtEndEnvironment{assumption}{\hfill$\IEEEQED$}

\newcommand{\cJ}{\mathcal{J}}
\newcommand{\cT}{\mathcal{T}}
\newcommand{\cB}{\mathcal{B}}

\newcommand{\norm}[1]{|#1|}
\newcommand{\boresight}{\hat{b}}
\newcommand{\los}{\hat{\ell}}

\def\BibTeX{{\rm B\kern-.05em{\sc i\kern-.025em b}\kern-.08em
    T\kern-.1667em\lower.7ex\hbox{E}\kern-.125emX}}
\begin{document}

\title{Robust Game-theoretic Motion Planning over Extended Time Horizons
\thanks{Outland would like to thank the Department of War Science, Math, and Research for Transformation (SMART) Scholarship for academic funding.}
}

\author{\IEEEauthorblockN{Bennet Outland and Vishala Arya}
\IEEEauthorblockA{\textit{Smead Aerospace Engineering and Sciences Department} \\
\textit{University of Colorado Boulder}\\
Boulder, USA \\
bennet.outland@colorado.edu, vishala.arya@colorado.edu}

\vspace{-0.4in}

}

\maketitle

\begin{abstract}

This work presents a solution to nonconvex, game-theoretic motion planning problems subject to disturbances over long time horizons. The problem is posed as a partially-decoupled generalized Nash equilibrium problem, in which each agent's dynamics depend only on its own state and control, admitting fast solution methods for competitive multi-agent motion planning. An algorithm, WOLF, is developed that applies receding-horizon model predictive control to an open-loop differential games solver based on sequential convexification. In contrast to robust formulations that fix the uncertainty description offline, the robustness tube here is itself a dynamic state, co-optimized with the trajectory, and its thickness directly sets the tightening of the shared coupling constraints. A sufficient condition is derived under which a nominal trajectory satisfying constraints tightened against all agents' error bounds remains feasible for every admissible disturbance realization. The method is demonstrated on two adversarial on-orbit games with coupled translational–attitude dynamics: a stealthy co-orbital jamming game under an active detection-probability bound, and a sun-blocking game in which an adversary disables an evader by decreasing solar power. 
\end{abstract}

\begin{IEEEkeywords}
Robotics, Aerospace, Non-Linear Control Systems, Differential Game Theory, Electronic Warfare
\end{IEEEkeywords}

\section{Introduction}

Robotic motion planning within the broader robotics research domain has been generally addressed by sampling-based \cite{karaman2011sampling} and sequential convex programming \cite{acikmese2007convex} methods. There are also a growing number of game-theoretic motion planning algorithms for solving multi-agent systems without the need of training ahead of time. The first of this nature is the iterative Linear-Quadratic games (iLQGames) algorithm that is the game theoretic equivalent of the iterated Linear-Quadratic regulator \cite{fridovich2020efficient}. While very fast, it natively lacks the ability to have constraints and does not have convergence guarantees \cite{le2022algames}. Similarly, the Differential Games Sequential Quadratic Programming (DG-SQP) solved quadratic approximations of the KKT conditions which led to a significant increase in reliability and convergence for nonconvex differential games, but at the cost of increased computational time \cite{zhu2024sequential}. More recently, the Fast Augmented Lagrangian Convexification for Open-loop Nash equilibria (FALCON) algorithm was introduced for constrained, nonconvex differential games \cite{outland2026fastconvergentalgorithmsolving}. By reformulating the original game through a sequence of convexified potential-game subproblems, FALCON enables rapid computation of open-loop Nash equilibria while providing stronger convergence properties than previous approaches.

Existing game-theoretic trajectory optimizers are largely formulated using nominal dynamics and therefore do not explicitly account for disturbances arising from environmental effects, modeling errors, or unmodeled dynamics \cite{fridovich2020efficient, le2022algames, zhu2023sequential}. Related work in distributed and robust multi-agent control has demonstrated strong results for cooperative systems, but commonly considers disturbed linear dynamics, assumes potential-game structure, or relies on precomputed worst-case reachable sets \cite{xue2023distributed, sun2024convex, huang2026fast, chen2021scalable, zhan2025robustly}. These assumptions restrict their applicability to nonlinear, high-dimensional systems operating in competitive or adversarial settings. Furthermore, many existing game-theoretic planners compute open-loop solutions over a fixed horizon, making their direct application over extended mission durations increasingly susceptible to model mismatch and computational growth.

The primary contribution of this work is a robust game-theoretic motion-planning framework for high-dimensional nonlinear multi-agent systems with nonconvex constraints, long horizons, and dynamic uncertainty, applicable to both cooperative and competitive scenarios. The method extends FALCON using a receding-horizon formulation \cite{kwon2005receding} and tube-based MPC \cite{mayne2011tube}; specifically the dynamic-tube formulation of \cite{lopez2019dynamic}, in which the tube width is a state driven by an auxiliary control and is not fixed at design time. This yields a sufficient condition under which a nominal trajectory satisfying constraints tightened against all agents' error bounds remains feasible for every bounded admissible disturbance realization.
Secondary contributions include the realistic treatment of two competitive space-robotics test cases: an on-orbit  jamming game and a sun-blocking game. 


\section{Preliminaries} \label{sec: Preliminaries}

\subsection{Notation}

We define the set of indices that correspond to each agent as $[N] = \{1, ..., N \}$ where $i \in [N]$ denotes a given agent. Furthermore, we define $-i := \forall j \in [N] \neq i$ that corresponds to all agents excluding agent $i$. Furthermore, subscripts are used to denote the timestep, $(\cdot)_k$. The time horizon, $T$, is an integer corresponding to the number of timesteps. We then represent the states and control of an agent over a time horizon, defining a trajectory, as $x_k^{(\cdot)} \in X^{(\cdot)} \subseteq \mathbb{R}^n; \, \mathcal{X}^i \in \mathbb{R}^{n \times T}$ and $u_k^{(\cdot)} \in U^{(\cdot)}  \subseteq  \mathbb{R}^m; \, \Upsilon^i \in \mathbb{R}^{m \times T}$, respectively. For notational cleanliness, the following short hand will be occasionally utilized: $\psi^i \equiv (\mathcal{X}^i, \Upsilon^i)$ and $\psi \equiv (\mathcal{X}, \Upsilon)$. The skew-symmetric cross-product will be denoted by $(\cdot)^\times$.

\subsection{Spacecraft Dynamics} \label{subsec: Spacecraft Dynamics}


We consider the rigid-body motion and control of a near-Earth spacecraft. Translational motion is modeled by the non-linear relative motion equations between two spacecraft \cite{schaub2003analytical}. The rigid-body attitude in the $SO(3)$ group is parameterized by modified Rodrigues parameters (MRPs), $\sigma^i \in \mathbb{R}^3$, and the singularity can be avoided by switching to a shadow set \cite{schaub2003analytical}. To avoid the discontinuity in the MRP parameterization, the following constraint is enforced: $\| \sigma^i \| \leq 1$. The kinematic equations are 
\begin{subequations}
    \begin{equation}
        \dot{\sigma}^i = \frac{1}{4} B(\sigma^i) \omega^i, 
    \end{equation}
    \begin{equation}
        B(\sigma^i) = \left( 1 - (\sigma^i)^T \sigma^i \right)I+ 2 [(\sigma^i)^\times] + 2\sigma^i (\sigma^i)^T
    \end{equation}
\end{subequations}
\noindent where $\omega^i$ is the body rate. The rotational equation of motion is then,
\begin{equation}
    J^i \dot{\omega}^i = - [(\omega^i)^\times] J^i \omega^i + \tau^i, 
\end{equation}
\noindent where $J^i$ is the inertia tensor and $\tau^i$ is reaction wheels torque. 

\subsection{Fast Augmented Lagrangian Convexification for Open-loop Nash equilibria (FALCON)}

FALCON is a game-theoretic trajectory planning algorithm that takes inspiration from the Scvx* algorithm \cite{oguri2023successive}. FALCON decomposes the difficult non-convex differential game into convex potential games as seen in Algorithm \ref{alg: FALCON}. This allows for solutions to nonconvex differential games with continuous time constraints. Given mild assumptions that are common in control theory:

\begin{assumption}[Properties of System Dynamics] 
    \label{assump:system-properties}
    The system dynamics satisfy the following properties:
    \begin{enumerate}
        \item Bounded acceleration: 
        The second time derivative of state trajectories is uniformly bounded:
        \begin{equation}
            \|\ddot{x}^i(t)\| \leq M_2 \in \mathbb{R}_+, \quad \forall t \geq 0.
        \end{equation}
        \item Lipschitz continuity: 
        The dynamics $f$ are Lipschitz continuous in both arguments. That is, there exist constants $K_u, K_x > 0$ such that for all $x_0, x_1 \in \mathbb{X}$ and $u_0, u_1 \in \mathbb{U}$:
        \begin{align}
            \|f(x_0^i, u_0^i) - f(x_0^i, u_1^i)\| &\leq K_u \|u_0^i - u_1^i\|, \label{eq:lipschitz-control} \\
            \|f(x_0^i, u_0^i) - f(x_1^i, u_0^i)\| &\leq K_x \|x_0^i - x_1^i\|. \label{eq:lipschitz-state}
        \end{align}
    \end{enumerate}
\end{assumption}

\begin{assumption} \label{assump: cost}
    The cost functional and constraint functions are assumed to have the following properties:
    \begin{enumerate}
        \item $J^i(\mathcal{X}, \Upsilon) \in \mathcal{C}^2$, \\ $C(\mathcal{X}, \Upsilon) \in \mathcal{C}^2$;
        \item $|| \nabla_{\mathcal{X}, \Upsilon} J^i(\mathcal{X}, \Upsilon)|| \leq G_{\rm max} < \infty$, \\ $|| \nabla_{\mathcal{X}, \Upsilon^i} C(\mathcal{X}, \Upsilon)|| \leq G_{\rm max} < \infty$;
        \item $|| \nabla^2_{\mathcal{X}, \Upsilon} J^i(\mathcal{X}, \Upsilon)|| \leq H_{\rm max} < \infty$, \\$|| \nabla^2_{\mathcal{X}, \Upsilon} C(\mathcal{X}, \Upsilon^i)|| \leq H_{\rm max} < \infty$.
    \end{enumerate}
\end{assumption}


\begin{theorem}[Global Strong Convergence with Feasibility] \label{thm: main convergence}
    Under Assumptions \ref{assump: cost} and \ref{assump:system-properties}, and provided the second-order sufficient conditions of \cite[Assumption 1]{oguri2023successive} hold at the limit, Algorithm \ref{alg: FALCON} converges globally to a point, $\psi^*$, within its feasible region that is:
   i) a local variational Nash equilibrium of the original non-convex differential game;
   ii) dynamically feasible; iii) approximately coupling-feasible in the continuous-time sense for every $t \in [t_k, t_{k+1}]$ and every coupling constraint component.
\end{theorem}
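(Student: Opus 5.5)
The plan is to reduce the statement to the convergence theory of SCvx* \cite{oguri2023successive}, applied to the convexified potential-game subproblems that Algorithm \ref{alg: FALCON} solves, and then to translate each conclusion back to the original game.

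\textbf{Step 1 (potential reformulation).} At each outer iterate, the convexified subproblem is a potential game. Under Assumption \ref{assump: cost}, the linearized coupling constraints are shared and jointly convex, and each agent's linearized cost is $\mathcal{C}^2$. We therefore write its variational (normalized) Nash equilibria as the KKT points of a single convex program whose objective is the potential. We then observe that the sequence of such programs, together with the augmented Lagrangian multiplier and penalty updates and the trust-region acceptance test, is exactly an SCvx* iteration applied to one nonconvex program $\mathcal{P}$. This program $\mathcal{P}$ has the potential $\Phi(\psi)$ as objective, the agents' discretized dynamics as equality constraints, and $C(\mathcal{X},\Upsilon)\le 0$ as inequality constraints. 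Because the dynamics are partially decoupled, the stationarity conditions of $\mathcal{P}$ split agent by agent into the KKT systems of the individual players. This splitting holds with a common multiplier on the shared constraints, which is the defining property of a variational GNE.

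\textbf{Step 2 (invoke SCvx*).} We check the hypotheses of the SCvx* global convergence theorem. The gradient and Hessian bounds $G_{\rm max}, H_{\rm max}$ in Assumption \ref{assump: cost} give Lipschitz gradients, so the linearization error is $O(\|\delta\psi\|^2)$ uniformly. Together with the trust-region mechanism, this error bound gives sufficient decrease of the augmented Lagrangian merit function. The merit function is bounded below, so the iterates have accumulation points. Using the assumed second-order sufficient condition at the limit \cite[Assumption 1]{oguri2023successive}, we upgrade this to convergence of the whole sequence to a single point $\psi^*$ that is a KKT point of $\mathcal{P}$ with vanishing constraint violation. By Step 1, $\psi^*$ is then a local variational Nash equilibrium, which gives i). The same argument shows $\psi^*$ satisfies the discrete dynamics exactly, which gives ii) at the nodes.

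\textbf{Step 3 (continuous-time feasibility).} We extend the node conditions to every $t\in[t_k,t_{k+1}]$. For the dynamics, the Lipschitz bounds \eqref{eq:lipschitz-control}--\eqref{eq:lipschitz-state} together with Gronwall's inequality show that the propagated continuous trajectory agrees with the discretization. For each coupling component $c_j$, we Taylor expand $c_j(x(t))$ about the node $t_k$. The mean-value theorem and the chain rule give
\[
c_j(x(t)) \le c_j(x(t_k)) + G_{\rm max}\|\dot x(t_k)\|\Delta t + \tfrac12\bigl(H_{\rm max}\|\dot x\|^2 + G_{\rm max}M_2\bigr)\Delta t^2 .
\]
Here $\|\dot x\|$ is bounded via $K_x, K_u$ on the compact feasible set. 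Since $c_j(x(t_k))\le 0$ at $\psi^*$, the inter-sample violation is $O(\Delta t)$, or $O(\Delta t^2)$ if the algorithm enforces the usual isoperimetric or midpoint constraint. This is the "approximately coupling-feasible" claim in iii).

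\textbf{Main obstacle.} The hardest step is Step 1: establishing rigorously that the FALCON subproblems are exact potential games, so that the SCvx* merit function is a legitimate Lyapunov function for the multi-agent iteration. For a general non-potential game, the pseudo-gradient is not a gradient, and SCvx* cannot be applied directly. Our first attempt would be to show that the convexification step replaces each player's cost by a quadratic model whose cross terms are symmetrized. That symmetrization gives a potential for the surrogate game, and stationary points of this surrogate potential coincide with those of the original game at a fixed point. This coincidence holds because the symmetrization error vanishes when $\delta\psi=0$. If this fails, the fallback is to assume a monotone pseudo-gradient and use a variational-inequality version of the SCvx* descent argument.
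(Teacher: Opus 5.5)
Your reduction breaks at Step 1, and Step 2 inherits the failure. You claim FALCON's iteration is ``exactly an SCvx* iteration applied to one nonconvex program $\mathcal{P}$'' with objective $\Phi(\psi)$. That requires a single function with $\nabla_{\psi^i}\Phi = \nabla_{\psi^i}J^i$ for every agent, i.e.\ the original game must be an exact potential game. For $\mathcal{C}^2$ costs this forces $\nabla^2_{\psi^i\psi^j}J^i = \nabla^2_{\psi^i\psi^j}J^j$ for all $i\neq j$. The theorem assumes nothing of the kind, and the paper's own games violate it. The jammer and target costs are $\pm\mathrm{SINR}$ and the blocker and evader costs are $\pm P_{\rm net}$ (up to each agent's own separable terms), so the cross blocks are nonzero and of opposite sign. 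Each convexified subproblem is a potential game, but its potential $\Phi_k$ is built from the pseudo-gradient $F(\bar\psi_k) = \bigl(\nabla_{\psi^i}J^i(\bar\psi_k)\bigr)_{i\in[N]}$ at the current reference point. Because $F$ is not a gradient field, the $\Phi_k$ are not local models of any fixed function. Hence there is no fixed augmented Lagrangian against which the SCvx* ratio test can compare actual and predicted decrease. The argument ``sufficient decrease of a merit function that is bounded below'' then has nothing to telescope, since a decrease in $\Phi_k$ says nothing about $\Phi_{k+1}$. So the SCvx* global convergence theorem cannot be invoked, and neither the convergence claim nor i) follows. Your agent-by-agent splitting of the KKT system of $\mathcal{P}$ likewise only exists in the potential case.

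The repair you sketch under ``Main obstacle'' does not close this. Symmetrizing the cross terms of the quadratic model only ensures that a fixed point of the iteration ($\delta\psi = 0$) is a variational Nash equilibrium. That characterizes limits but gives no iterate-independent Lyapunov function, hence no convergence. The fallback, a monotone pseudo-gradient, adds a hypothesis absent from the statement and generally false for nonconvex games. The missing idea is a merit function defined on the game itself, for example a Nikaido--Isoda-type Nash gap or a KKT (variational-inequality) residual combined with the augmented-Lagrangian constraint penalty. You would then need to prove that the trust-region acceptance and improvement tests of Algorithm~\ref{alg: FALCON} force sufficient decrease of that function. The paper does not carry this out itself; it defers the entire proof to \cite{outland2026fastconvergentalgorithmsolving}. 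Two smaller points. First, a merit function that is bounded below does not by itself give accumulation points; you also need bounded iterates, e.g.\ compact sublevel sets or a compact feasible set. Second, in Step 3 Gronwall gives agreement with the continuous-time dynamics only up to the discretization error unless the discretization is exact, and your Taylor bound gives an $O(\Delta t)$ inter-sample violation. That is compatible with ``approximately coupling-feasible'', but you should state it as an explicit quantitative bound.
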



\begin{IEEEproof}
    See \cite{outland2026fastconvergentalgorithmsolving}.
\end{IEEEproof}

\begin{algorithm}[t]
\caption{FALCON: Fast Augmented Lagrangian Convexification for Open-loop Nash Equilibria}
\label{alg: FALCON}
\begin{algorithmic}[1]
 
\State INITIALIZE\_VARIABLES$()$ \Comment{Duals and Penalty}
 
\Repeat 
 
    \For{$ i \in [N]$}
        \State \textsc{LOCAL\_LINEARIZATION}$()$ \Comment{Problem \ref{prob: general formulation}}
    \EndFor
    \State ASSEMBLE\_POTENTIAL() \Comment{Local Potential Game}
    \State {SOLVE\_CVX\_SUBPROBLEM}$()$ \Comment{SOCP}
    \State \textsc{SOL\_QUALITY}$()$ \Comment{Check Local Approximation}
    \If{ACCEPT\_STEP} \Comment{Good Approximation}
        \If{IMPROVEMENT} \Comment{Closer to Nash Eq.}
            \State  \textsc{UPDATE\_DUALS}$()$
        \EndIf
        \State \textsc{UPDATE\_TRUST\_REGION}$()$ 
    \Else
        \State REJECT\_STEP() \Comment{Shrink Trust Region}
    \EndIf
\Until{Converged} \Comment{Local Nash Equilibria}
 
\State \Return Trajectories
\end{algorithmic}
\end{algorithm}

\subsection{Dynamic Tube Model Predictive Control}

Recalling the dynamics defined in Section \ref{subsec: Spacecraft Dynamics}, this model is used with disturbance from unmodeled dynamics. 
\begin{assumption}[\cite{lopez2019dynamic}]
    The dynamics $f$ can be expressed as $f = \hat{f} + \tilde{f}$ where $\hat{f}$ is the nominal dynamics and $\tilde{f}$ is the bounded model error (i.e., $|\tilde{f}(x)| \leq \Delta(x)$).
\end{assumption}

\begin{assumption}[\cite{lopez2019dynamic}] \label{asmp: disturbance}
    The disturbance $d^i$ belongs to a closed, bounded, and connected set $\tilde{D}$ (i.e., $\mathbb{D} := \{ d^i \in \mathbb{R}^n: |d^i| \leq \tilde{D}\}$) and is in the span of the control input matrix.
\end{assumption}

\noindent We then consider an ancillary controller per each agent, $\kappa^i$, to keep the state, $x^i$, within a robust control invariant (RCI) tube about, $x^{i, *}$, the per-agent nominal trajectory. 

\begin{definition}[\cite{lopez2019dynamic}] \label{def: tube}
    Let $\mathbb{X}$ denote the set of allowable states and let $\tilde{x}^i := x^i - x^{i, *}$. The set $\Omega^i \subset \mathbb{X} \subset \mathbb{R}^n$ is a RCI tube for an agent, $i$, if there exists an ancillary controller $\kappa^i(x^i, x^{i, *})$ such that if $\tilde{x}^i(t_0) \in \Omega$, then, for all realizations of the disturbance and modeling error, $\tilde{x}^i(t) \in \Omega \, \forall t \geq t_0$.
\end{definition}

Sliding control is employed as the ancillary controller in Definition \ref{def: tube} with a corresponding sliding manifold and sliding variable, $s_j^i$, for the $j$th state. Let $\mathcal{B}_j^i := \{x^i : |s_j^i| \leq \Phi_j^i \}, \forall j\in n$ be the boundary layer on the sliding manifold with thickness $\Phi_j^i$. With a sliding surface convergence rate parameter, $\eta_j^i$, the boundary layer can be attractive. This promotes a robust invariant neighborhood to minimize control signal chatter.

The ancillary controller, $\kappa$, can then be shown to be
\begin{align}
    u^i = & B(x)^{-1} \big[ - \hat{F}^i(x^i) - x_{r, j}^{i} - \\
    & (\alpha^i \Phi^i + \Delta(x^i) - \Delta(x^{i,*})) \text{ sat}(s^i / \Phi^i)\big], \notag
\end{align}
\noindent by using the augmented tube dynamics 
\begin{align}
    \dot{\Phi}^i &= - \alpha^i \Phi^i + \Delta(x^{i, *}) + D^i + \eta^i, \\
    \dot{\alpha}^i &= v^i
\end{align}
\noindent where $\alpha^i$ is a cutoff frequency parameter and $v^i \in \mathbb{V} \in \mathbb{R}^m$ is an artificial control  variable for enforcing tube stability \cite{lopez2019dynamic}.

\section{Disturbed Partially-Decoupled Generalized Nash Equilibrium Problems} \label{sec: PDGNEP}

\subsection{General Formulation}

We consider a mild relaxation of a generalized Nash equilibrium problem (GNEP) for differential games \cite{outland2026fastconvergentalgorithmsolving}. For $i\in[N]$ agents, there are states, $\mathcal{X}^i$, and controls, $\Upsilon^i$. When applied to multi-agent robotics, it can be assumed that the dynamics of each agent, $D^i(\mathcal{X}^i, \Upsilon^i)$, is separable (i.e. the state and control of an agent only applies to the dynamics of that agent). Additionally, each agent has dynamical disturbance, $d^i$, that follows Assumption \ref{asmp: disturbance}. The magnitude of the disturbance can be chosen in accordance with uncertainty in the dynamics model. Under this disturbance, each agent minimizes an extended value cost function $J^i(\mathcal{X}, \Upsilon): \mathbb{R}^{n\times T} \times \mathbb{R}^{m\times T} \rightarrow \mathbb{R} \bigcup \{ +\infty \}$ subject to inequality constraints, $C(\mathcal{X}, \Upsilon) \leq 0$. The general form of the problem can then be defined as

\begin{problem}[Disturbed Partially-Decoupled Generalized Nash Equilibrium Problem (D-PDGNEP)] 
    \begin{equation*} \label{prob: general formulation}
        \begin{alignedat}{2}
        &\min_{\mathcal{X}^i, \Upsilon^i} \quad && J^i(\mathcal{X}, \Upsilon) \\
        \forall i \in [N]  \,\,\,\,\, & \,\text{s.t.} \quad && D^i(\mathcal{X}^i, \Upsilon^i) - d^i = 0, \\
        &&& C(\mathcal{X}, \Upsilon) \leq 0.
        \end{alignedat}
    \end{equation*}
\end{problem}


\begin{definition}[D-PDGNEP Solution] \label{def: solution}
    For a given game, a set of strategies, $\Upsilon^*$, and trajectories, $\mathcal{X}^i$,  subject to every admissible disturbance, $d^i$, are a variational Nash Equilibrium if for every agent $i \in [N]$ with strategy $\Upsilon^i$ such that, $J^i(\mathcal{X}, \Upsilon^{*}) \leq J^i(\mathcal{X}, \{ \Upsilon^{i},\Upsilon^{-i*} \})$ and equivalent dual values for shared constraints, $C(\cdot)$.
\end{definition}

\subsection{Stealthy Co-orbital Jamming Game}

 Consider a non-cooperative game wherein a jammer spacecraft seeks to jam another target satellite on-orbit that is attempting to receive a ground-station command during a pass. Furthermore, the jamming satellite is to perform this operation while being constrained on the detection probability. Following literature on space electronic warfare, we define gains for each spacecraft \cite{adamy2021ew, fotiadis2026optimal}. Let $\boresight^i(\sigma^i)$ denote the antenna boresight for an agent that follows a main-lobe model
\begin{equation}
    g_i(c) = g_{\min}+(1-g_{\min})e^{-\beta_i(1-c)},\\
\end{equation}
where $g_{\min}$ is the normalized sidelobe floor, $\Theta_i$ is the beam width, $\beta_i$ is selected such that $g_i(\cos(\Theta_i/2))=0.5$ for defining half-power beam width. The pointing cosines for some reference $A$ and $B$ are $c_{AB} =\boresight^A\cdot\los_{AB}$ for a unit line-of-sight vector $\los_{AB}$.

The quality of the ground-station signal is measured by the signal-to-interference-noise ratio (SINR),
\begin{subequations}
    \begin{align}
        \mathrm{SINR}
        &=
        \frac{\sigma_r g_\cT(c_{\cT g})}
        {1+\sigma_r
        \left(\dfrac{d_{\rm eq}^2}{d_\epsilon^2}\right)
        g_\cJ(c_{\cJ\cT})g_\cT(c_{\cT\cJ})},\\
        d_{\rm eq}^2
        &=
        \frac{\kappa_\cJ G_\cJ^{\max}G_\cT^{\max}}{\sigma_r}.
    \end{align}
\end{subequations}
In this formulation, $\sigma_r = S_0 / N_0$ is the unjammed ratio, $\kappa_\cJ$ are the jammer transmission and propagation constants, and $d_{eq}$ is the characteristic jamming range. For this game, the SINR is used as a stage cost that the jammer seeks to minimize while the target seeks to maximize. The performance of each agent depends on both positions and attitudes.

Additionally, the jammer needs to reduce the probability that it will be detected during this operation from both a visual \cite{fotiadis2026optimal} and electro-magnetic perspective \cite{adamy2021ew}. The state of the jammer is augmented to add a detection probability state, $\Lambda^\cJ$. We model the detection event as an inhomogeneous Poisson process whose rate depends on the instantaneous relative geometry and thrust magnitude, with the value of
\begin{equation}
    \dot{\Lambda}^\cJ
    =
    \underbrace{
    \frac{k_{\det}\kappa_\cJ G_\cJ^{\max}g_\cJ(c_{\cJ\cT})}
    {d_\epsilon^2}
    }_{\text{RF intercept}}
    +
    \underbrace{
    k_{\rm thr}
    \left(
    \sqrt{\norm{\mathbf a^\cJ}^2+\epsilon_a^2}-\epsilon_a
    \right)
    }_{\text{thruster detection}} .
\end{equation}
The detection probability, $\mathbb{P}_{det}(T)$, is the probability of at least one detection over $[0,T]$. The cumulative probability of detection can be constrained by a upper-bound, $\bar{P}$.
\begin{subequations}
    \begin{align}
        \mathbb P_{\det}(T) &= 1-e^{-\Lambda^\cJ(T)},\\
        \Lambda^\cJ(T)&\leq-\ln(1-\bar P).
    \end{align}
\end{subequations}


\subsection{Sun-Blocking Game}

Energy management is important to complete spacecraft mission objectives \cite{ARYA2026113492}. To exploit this, consider two spacecraft where one attempts to block the sun from shining on a target satellite's solar panels via an extended opaque disk circumscribed on the body. The goal of the blocker satellite is to reduce the battery recovery of the target spacecraft in order to temporarily disable it by casting a shadow on the evader's solar panels \cite{10115968}. Unlike previous work, we directly consider the energy denial as the part of the objective instead of just positioning. 


First, the spacecraft shadowing model is developed \cite{zapevalin2026comparative}. Consider the angular radius of the Sun, $\rho_\odot$, with a solar disk approximated by a fixed quadrature, $\{(\xi_m,w_m)\}_{m=1}^{N_\odot}$, where $\sum_m w_m=1$. For directions $e_1,e_2$ that are orthogonal to the solar disk, a solar ray, $m$, has direction
\begin{equation}
\hat{s}_m
= \frac{
\hat{s}
+
\tan\rho_\odot
\left(
\xi_{1m}e_1+\xi_{2m}e_2
\right)}
{
\norm{
\hat{s}
+
\tan\rho_\odot
\left(
\xi_{1m}e_1+\xi_{2m}e_2
\right)
}_2
},
\end{equation}
where $\hat{s}$ is the central sun normal vector and $\hat n_\cB(\sigma^\cB)$ is the normal vector of the blocker's disk. Considering a point $p$ on the target's solar panel, the intersection of solar ray $\hat{s}_m$ with the blocker is
\begin{subequations}
\begin{align}
\lambda_m^\epsilon
&=
\left[
\hat n_\cB^\top(r_\cB-p)
\right]
\frac{
\hat n_\cB^\top\hat{s}_m
}{
(\hat n_\cB^\top\hat{s}_m)^2+\epsilon_{\rm ray}^2
},\\
x_m^\epsilon
&=
p+\lambda_m^\epsilon\hat{s}_m .
\end{align}
\end{subequations}
The fraction of the $m$th ray blocked by the disk with a radius of $R_\cB$ is smoothly approximated as
\begin{equation}
b_m(p)
=
\operatorname{sig}
\left(
\frac{\lambda_m^\epsilon}{\delta_\lambda}
\right)
\operatorname{sig}
\left(
\frac{
R_\cB^2-\norm{x_m^\epsilon-r_\cB}_2^2
}{
\delta_R^2
}
\right),
\end{equation}
where $\operatorname{sig}(z)=(1+e^{-z})^{-1}$. 
The smoothed model of the occlusion is then modeled as
\begin{equation}
\kappa(p)
=
\sum_{m=1}^{N_\odot}w_m b_m(p).
\end{equation}
In this formulation, $\kappa\approx1$ denotes full occlusion in the umbral cone, $0<\kappa<1$ is a partial blockage (possibly in penumbra), and $\kappa\approx0$ is no-blockage.

The target is modeled as having two solar panel wings of equal size with total area $A_p$ and collecting-area densities $w_+(q)$ and $w_-(q)$. Their power-weighted occultation fractions are
\begin{equation}
\bar{\kappa}_\pm
=
\frac{1}{A_p}
\int_{\mathbb R^2}
w_\pm(q)
\kappa\!\left(
r_\cT+
R_\cT(\sigma^\cT)
\begin{bmatrix}
0\\q
\end{bmatrix}
\right)dq .
\end{equation}
The solar power generated by the target is then
\begin{equation}
P_{\rm gen}
=
\eta S_\odot
h_\epsilon
\left(
\hat n_{\rm arr}^{\top}\hat{s}
\right)
A_p
\left(
2-\bar{\kappa}_+-\bar{\kappa}_-
\right),
\end{equation}
where $\eta$ is the efficiency of the solar cells, $S_\odot$ is the solar irradiance, $\hat n_{\rm arr}=R_\cT(\sigma^\cT)e_x$ is the array normal, and
\begin{equation}
h_\epsilon(c)
=
\frac{1}{2}
\left(
c+\sqrt{c^2+\epsilon_c^2}
\right)
\end{equation}
is a smooth approximation of the positive orthant. Thus, generated power depends on the positions and attitudes of both agents.

The power consumption of the target is modeled by a constant draw from the bus as well as the propulsion system. The net power is
\begin{subequations}
\begin{align}
P_{\rm net}
&=
P_{\rm gen}-P_{\rm bus}-P_{\rm thr},\\
P_{\rm thr}
&=
P_{\rm thr,max}
\frac{
\norm{F^\cT}_2
}{
F_{\max,\cT}
}.
\end{align}
\end{subequations}
The net power is used as the stage cost of the game where the blocker seeks to decrease it to disable the target, while the target seeks to maximize it. The target then needs to balance between escaping a shadowed region and using thrust that further decreases battery capacity. If the battery reaches a low-power state, determined by integrating $P_{net}$, the Target enters a safe-mode and is functionally disabled.


\section{Proposed Solution} \label{sec: Proposed Solution}

\subsection{Constraint Tightening} \label{subsec: Constraint Tightening}
We seek to tighten the general inequality constraints of Problem \ref{prob: general formulation} against bounded dynamical disturbances, where robustness is provided by an ancillary controller subject to the available control-authority constraints.

\begin{proposition} \label{prop: box}
Let $\gamma^i_j = \Phi_j^i(t) / \alpha^i(t)$ be a $j$ component-wise bound on the disturbed trajectory \cite{slotine1991applied}. Consider a box such that $\mathcal{T} = \{|\tilde{\psi}| \leq \gamma \}: \sup_{\tilde{\psi} \in \mathcal{T}} g^T \tilde{\psi} = |g|^T \gamma$ and $\sup_{\tilde{\psi} \in \mathcal{T}} \| \tilde{\psi} \|^2 = \| \gamma \|^2$.
\end{proposition}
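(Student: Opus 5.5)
The proof reduces to two elementary facts about the axis-aligned box $\mathcal{T}=\{\tilde{\psi} : |\tilde{\psi}_j|\le\gamma_j\ \forall j\}$ with $\gamma_j=\Phi^i_j(t)/\alpha^i(t)\ge 0$. Both suprema separate over coordinates, because $\mathcal{T}$ is a Cartesian product of intervals $[-\gamma_j,\gamma_j]$. That $\gamma$ actually bounds the tracking error comes from the cited boundary-layer argument \cite{slotine1991applied}. Inside $\mathcal{B}^i_j$ the sliding variable satisfies $|s^i_j|\le\Phi^i_j$. The error dynamics act as a first-order low-pass filter with cutoff $\alpha^i$, so $|\tilde{x}^i_j|\le\Phi^i_j/\alpha^i$. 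We take this bound, together with the RCI property of Definition \ref{def: tube}, as given, so that $\tilde{\psi}\in\mathcal{T}$ for every admissible disturbance. I also need $\alpha^i(t)>0$ so that $\gamma$ is well defined. I would note this explicitly and cite the tube stability enforced through $v^i$.

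\textbf{Linear term.} For any $\tilde{\psi}\in\mathcal{T}$, write $g^T\tilde{\psi}=\sum_j g_j\tilde{\psi}_j\le\sum_j|g_j||\tilde{\psi}_j|\le\sum_j|g_j|\gamma_j=|g|^T\gamma$. This gives the upper bound. The supremum is attained at the vertex $\tilde{\psi}_j=\operatorname{sign}(g_j)\gamma_j$, with either sign allowed when $g_j=0$. That vertex lies in $\mathcal{T}$ and achieves equality. Equivalently, this is the support function of a box, i.e.\ the dual-norm identity for the weighted $\ell_\infty$ ball.

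\textbf{Quadratic term.} Each summand of $\|\tilde{\psi}\|^2=\sum_j\tilde{\psi}_j^2$ satisfies $\tilde{\psi}_j^2\le\gamma_j^2$, so $\|\tilde{\psi}\|^2\le\|\gamma\|^2$. Equality holds at $\tilde{\psi}=\gamma$ (or at any vertex), which belongs to $\mathcal{T}$. Since $\mathcal{T}$ is compact and both objectives are continuous, each supremum is a maximum, consistent with attainment at a vertex.

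\textbf{Where the difficulty lies.} The box identities are routine. The only real subtlety is the premise that the disturbed error stays in $\mathcal{T}$, which is componentwise and holds at each time $t$. That premise needs the initial error inside the tube, $\tilde{\psi}(t_0)\in\mathcal{T}$, and it needs Assumption \ref{asmp: disturbance} so that the ancillary controller can reject the disturbance. The presumable downstream use of the identities is constraint tightening. There one linearizes $C$ about the nominal trajectory and bounds the remainder using $H_{\max}$ from Assumption \ref{assump: cost}, giving $C(\psi^*)+|\nabla C|^T\gamma+\tfrac{1}{2}H_{\max}\|\gamma\|^2\le 0$. For the proposition itself, however, the two supremum computations above are the whole proof.
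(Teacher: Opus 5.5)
Your proof is correct and is the routine verification the paper leaves out. The paper gives no proof of this proposition; it just invokes the two box identities inside the constraint-tightening theorem, namely the support function $|g|^T\gamma$ attained at the vertex $\tilde{\psi}_j=\mathrm{sign}(g_j)\gamma_j$, and the maximum $\|\gamma\|^2$ of $\|\tilde{\psi}\|^2$ attained at any vertex. As in the paper, you treat the containment $\tilde{\psi}\in\mathcal{T}$ as a hypothesis rather than something to prove, which is appropriate, though note that the cited boundary-layer argument only justifies $\Phi^i_j/\alpha^i$ as a bound on the state components of $\psi$, not the control components.
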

\begin{theorem}
    Let $C(\psi) \leq 0$ satisfy Assumption \ref{assump: cost}. Recall, $\psi \equiv(\mathcal{X}, \Upsilon)$. Let the disturbance state of each agent lie in the box $\mathcal{T}^i$. Robust feasibility is guaranteed if, 
    \begin{equation*}
        C(\psi^*) + \sum_{i=1}^N | \nabla_{\psi^i} C(\psi^*)|^T \gamma^i + \frac{H_{max}}{2} \sum_{i=1}^N \| \gamma^i \|^2 \leq 0,
    \end{equation*}
    then $C(\psi^* + \tilde{\psi}) \leq 0$ for every admissible $\tilde{\psi} \in \Pi_i \mathcal{T}^i$.
\end{theorem}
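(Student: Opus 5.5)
The plan is a second-order Taylor expansion of $C$ about the nominal point $\psi^*$, followed by bounding the first- and second-order terms with the box support-function identities of Proposition~\ref{prop: box}. Treat each scalar component of $C$ separately; the vector inequality then follows componentwise. Fix an admissible $\tilde{\psi} = (\tilde{\psi}^1,\dots,\tilde{\psi}^N) \in \Pi_i \mathcal{T}^i$. By Assumption~\ref{assump: cost}, $C \in \mathcal{C}^2$, so Taylor's theorem with Lagrange (or integral) remainder along the segment $\psi^* + s\tilde{\psi}$, $s\in[0,1]$, gives
\begin{equation*}
C(\psi^*+\tilde{\psi}) = C(\psi^*) + \nabla_\psi C(\psi^*)^T \tilde{\psi} + \tfrac{1}{2}\tilde{\psi}^T \nabla^2_\psi C(\bar{\psi})\,\tilde{\psi}
\end{equation*}
for some $\bar{\psi}$ on that segment.

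\textbf{Linear term.} Because the agents' variables are stacked, $\nabla_\psi C(\psi^*)^T\tilde{\psi} = \sum_{i=1}^N \nabla_{\psi^i} C(\psi^*)^T \tilde{\psi}^i$. Each summand is bounded by its supremum over the box $\mathcal{T}^i$, which by Proposition~\ref{prop: box} (with $g = \nabla_{\psi^i}C(\psi^*)$) equals $|\nabla_{\psi^i} C(\psi^*)|^T\gamma^i$. The box is a product set, so the supremum of the sum is the sum of the suprema; this is the source of the summation over agents.

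\textbf{Quadratic term.} Bound it by the operator norm: $\tfrac12 \tilde{\psi}^T \nabla^2 C(\bar\psi)\tilde{\psi} \le \tfrac{H_{\max}}{2}\|\tilde\psi\|^2$, using item 3 of Assumption~\ref{assump: cost}. Then $\|\tilde\psi\|^2 = \sum_i \|\tilde\psi^i\|^2 \le \sum_i \|\gamma^i\|^2$ by the second identity of Proposition~\ref{prop: box}, applied agent by agent. Adding the three pieces gives $C(\psi^*+\tilde\psi)$ bounded above by the left-hand side of the hypothesis, which is $\le 0$, and this concludes the proof.

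\textbf{Expected obstacle.} The delicate step is justifying the Hessian bound at the intermediate point $\bar\psi$. This requires $H_{\max}$ to hold uniformly on the whole segment, not just at $\psi^*$, and Assumption~\ref{assump: cost} states it globally. The same step must also absorb the slight notational mismatch in Assumption~\ref{assump: cost}, where the Hessian is written with argument $\Upsilon^i$. I would state explicitly that the Hessian bound is taken over the full stacked variable $\psi$ on the convex set $\psi^* + \Pi_i\mathcal{T}^i$.

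A second subtlety is that $\gamma^i = \Phi^i/\alpha^i$ bounds the deviation only while the ancillary controller keeps $\tilde x^i$ in the RCI tube of Definition~\ref{def: tube}. I would invoke that definition to assert that every admissible disturbance realization yields $\tilde\psi^i \in \mathcal{T}^i$, which is what makes the pointwise bound above a genuine robust-feasibility guarantee.
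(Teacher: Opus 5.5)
Your proposal is correct and matches the paper's proof: a second-order Taylor expansion about $\psi^*$ with the remainder bounded by $\frac{H_{max}}{2}\|\tilde\psi\|^2$, followed by applying the box identities of Proposition~\ref{prop: box} agent by agent to the linear and quadratic terms. Your closing remark about invoking the RCI tube is not needed, because $\tilde\psi \in \Pi_i\mathcal{T}^i$ is already a hypothesis of the theorem (and the tube of Definition~\ref{def: tube} would only bound the state part of $\tilde\psi$, not the control part).
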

\begin{IEEEproof}
    We seek to determine an upper-bound for the expression $\sup_{\tilde{\psi} \in \mathcal{T}} C(\psi^* + \tilde{\psi}) \leq 0$, where $\mathcal{T} = \Pi_i \mathcal{T}^i$. Consider a trajectory disturbance, $\tilde{\psi} \in \mathcal{T}$. A Taylor series expansion is taken on $\psi^*$ to $\psi^* + \tilde{\psi}$ with a second-order remainder. Bounded from above using Assumption \ref{assump: cost},
    \begin{equation*}
        C(\psi^* + \tilde{\psi}) \leq C(\psi^*) + \nabla C(\psi^*)^T \tilde{\psi}+ \frac{1}{2}H_{max}\| \tilde{\psi} \|^2.
    \end{equation*}
    The supremum over the box in Proposition \ref{prop: box} gives,
    \begin{equation*}
       \sup_{\tilde{\psi} \in \mathcal{T}} C(\psi^* + \tilde{\psi}) \leq C(\psi^*) + \sup_{\tilde{\psi} \in \mathcal{T}} \nabla C(\psi^*)^T \tilde{\psi}+ \sup_{\tilde{\psi} \in \mathcal{T}}\frac{1}{2}H_{max}\| \tilde{\psi} \|^2.
    \end{equation*}
    Applying Proposition \ref{prop: box} to the linear and quadratic terms on the right hand side results in a sufficient condition for robust constraint satisfaction,
    \begin{equation*}
        C(\psi^*) + \sum_{i=1}^N | \nabla_{\psi^i} C(\psi^*)|^T \gamma^i + \frac{H_{max}}{2} \sum_{i=1}^N \| \gamma^i \|^2 \leq 0.
    \end{equation*}
\end{IEEEproof}

\subsection{Worst-case, Online, Long-horizon FALCON (WOLF)}

With the constraint tightening defined in Section \ref{subsec: Constraint Tightening}, we can define a modified form of Problem \ref{prob: general formulation} to dynamically define the RCI tube. Additionally, to aid in adapting modeling error and longer time horizons, a receding horizon formulation is used \cite{mattingley2011receding}. 

\begin{problem}[Receding-Horizon, Dynamic-Tube PDGNEP] 
    \begin{equation*} \label{prob: DT formulation}
        \begin{alignedat}{2}
        &\min_{\mathcal{X}^i, \Upsilon^i, v^i} \quad && \frac{1}{T_H + 1} \sum_{\tau=t}^{t+T_H}  J^i_{\tau | t}(\mathcal{X}_\tau, \Upsilon_\tau) \\
        & \,\text{s.t.} \quad && D^i(\mathcal{X}^i, \Upsilon^i) - d^i = 0, \forall i \in [N] ,\\
        &&& \dot{\Phi}^i_\tau = - \alpha^i \Phi^i_\tau + \Delta^i(x^{i, *}_\tau) + d^i + \eta^i, \\
        &&& \dot{\alpha}^i = v^i, \alpha^i \in [\underline{\alpha}, \overline{\alpha}^i], |v^i| \leq \overline{v}, \\
        &&& C_{\tau | t}(\mathcal{X}^*_\tau, \Upsilon^*_\tau) + \sum_{i=1}^N | \nabla_{\mathcal{X}_\tau^i, \Upsilon^i_\tau} C_{\tau | t}(\mathcal{X}^*_\tau, \Upsilon^*_\tau)|^T \gamma^i\\ &&& \:\: \: \: \: \: \: \: \: \: \: \: \: \: \: \: \: \: \: \: \:  + \frac{H_{max}}{2} \sum_{i=1}^N \| \gamma^i \|^2 \leq 0,
        \end{alignedat}
    \end{equation*}
\end{problem}

\noindent For this problem, the user set parameters not already defined are as follows: $\alpha^i$, $\eta^i$, upper and lower bounds on $\alpha$ $([\underline{\alpha}, \overline{\alpha}^i])$, and maximum value of $v$ $(\overline{v})$. All other values can be determined from the dynamics, constraints, or disturbance. Problem \ref{prob: DT formulation} can then be solved in a receding horizon manner via Algorithm \ref{alg: FALCON}. We denote this reformulation as WOLF.

\section{Numerical Results} \label{sec: Numerical Results}

Using the WOLF formulation, we solve parameterizations of the Stealthy Co-orbital Jamming Game and Sun-Blocking Game defined in Section \ref{sec: PDGNEP}. Note: Local solutions are discovered and there may be many different Nash equilibria present in the game. The games herein were implemented using the DifferentialGames.jl ecosystem \cite{outland_2026_20432309}. In each game, the disturbance in the dynamics is at most 3\% of the applied control. Across 50 independent disturbance realizations of the jamming game, the constraint violation percentages were 26\% for FALCON, 20\% for FALCON in a receding horizon loop, and 0\% for WOLF with the worst case tube utilization observed at approximately 50\%: which is a normalized measure of tube boundary proximity. This demonstrates the effectiveness of the dynamic tube. Computationally, Algorithm \ref{alg: FALCON} required a median of 11 SCP iterations per receding-horizon 30 second look-ahead step over a simulation horizon of 2850 seconds, with a median solve time of 1.63 seconds per step on a AMD Ryzen 7 7445HS for the sun-blocking game.

\subsection{Stealthy Co-Orbital Jamming Game}

As seen in Figure \ref{fig: jamming}, the jammer starts in a periodic line segment via a near natural motion relative trajectory attempting to jam the ground station signal from the target. The jammer begins jamming at a further distance while the target responds by moving out of plane. During this maneuver, there is high signal degradation as the SINR drops to 1.84 dB (mean 9.08 dB over the horizon). The dominant control employed by the target is attitude re-pointing after the SINR drops below 5 dB at approximately 5 minutes into the game. During this slewing, the mean ground-pointing departure significantly increases ($2.93^\circ \to 6.09^\circ$) in an attempt to mitigate the impact of the jammer. Ultimately, the jammer backs off due to the detection constraint and keeps it relatively loose to maintain robustness to disturbances.

\begin{figure}
    \centering
    \includegraphics[width=\linewidth]{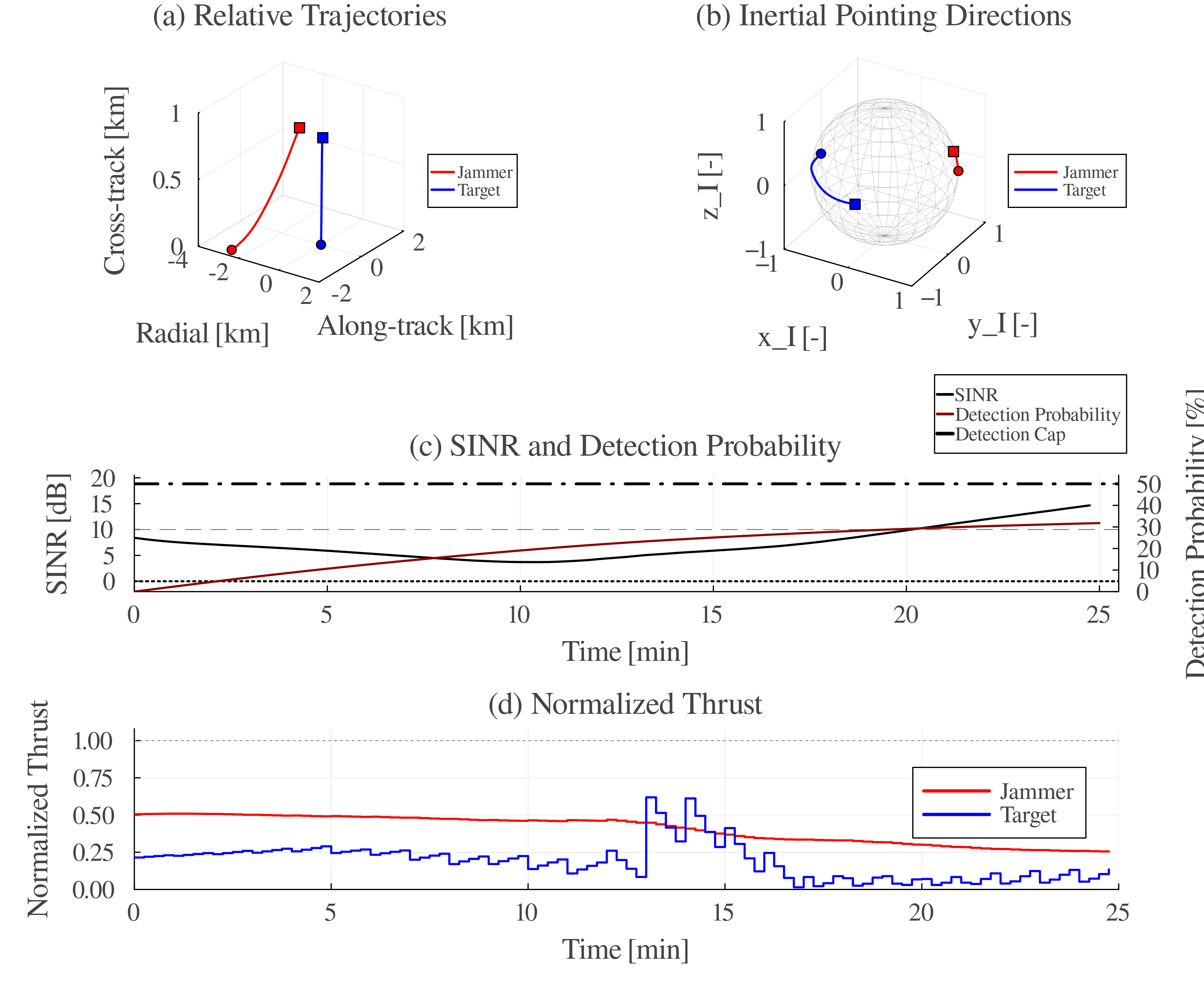}
    \caption{Jamming game: The trajectories in position and pointing are shown in subfigures a and b with the circle denoting the initial conditions and the square denoting the final state. A combined figure of the SINR competition and the detection probability are overlain. Finally, the normalized thrust profiles of both agents are displayed in subfigure d.}
    \label{fig: jamming}
\end{figure}

\subsection{Sun-Blocking Game}

For an example instantiation of the sun-blocking game, as displayed in Figure \ref{fig: blocking}, the blocker occults the right solar panel at the start and quickly follows with the left panel such that the gain quickly becomes 0\%; the evader escapes the initial blockage at about 4 minutes into the game and is re-acquired at about 17 minutes. The evader becomes temporarily blocked again near 21 and 45 minutes into the game. During the interim, the blocker is forcing the evader to make brief slews that cause sun pointing error and decrease the power gain as seen by the red shaded regions in Figure \ref{fig: blocking} near 7, 20, 30, and 40 minutes into the game. During this time, the evader continues loosing battery power as it must continue thrusting while also loosing potential power gain from the solar panels decreasing from 50\% to 8.7\%. Ultimately, the low battery of the evader activates the safe mode for the satellite and it is temporarily disabled.

\begin{figure}
    \centering
    \includegraphics[width=\linewidth]{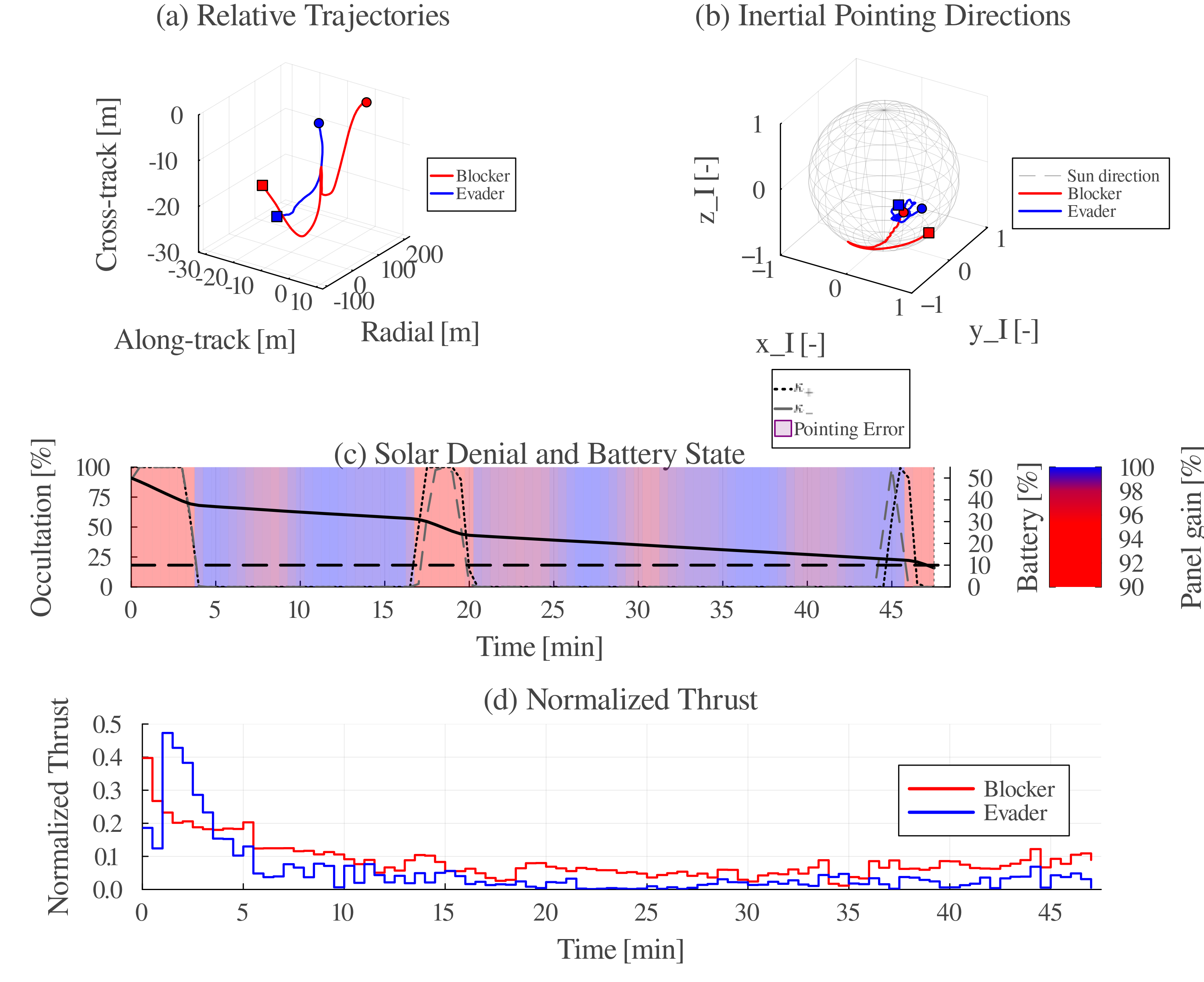}
    \caption{Sun-blocking game: The trajectories in position and pointing are shown in subfigures a and b with the circle denoting the initial conditions and the square denoting the final state. A combined figure of the occlusion fractions of the two solar panels of the evader, denoted by $\kappa_{ ( \cdot) }$ and the evader battery percentage are overlain. The background coloring denotes panel gain variations due to evader pointing errors. Finally, the normalized thrust profiles of both agents are displayed in subfigure d. }
    \label{fig: blocking}
\end{figure}

\section{Conclusion} \label{sec: Conclusion}

In this work, we introduced a novel method, WOLF, for robustly solving multi-agent motion planning problems over long time horizons. Leveraging previous work in robust single agent motion planning, tube model predictive control techniques are extended to multi-agent systems via a game-theoretic formulation. This direction included developing a sufficient condition for robust constraint tightening for generalized multi-agent constraints. The numerical effectiveness of this method was demonstrated via an on-orbit jamming game and a sun-blocking game. The WOLF algorithm extends the capabilities of differential game solvers to be robust to disturbances while solving competitive game scenarios.

\section*{Acknowledgment}
Outland dedicates his contribution {\footnotesize \calligra S.D.G}.

\section*{Generative AI Disclosure}
Generative AI was used for literature review aid, software implementation of the algorithms in this work, and as an informal proof assistant and aid. All work has been independently verified by the authors. Various models from OpenAI and Anthropic were used.

\bibliographystyle{IEEEtran}
\bibliography{ref}

\end{document}